\documentclass[11pt]{article}
\setlength{\textheight}{7.4in} \setlength{\textwidth}{6.5in} \setlength{\oddsidemargin}{0.0in}
\usepackage{amssymb,amsthm}
\theoremstyle{plain}
\newtheorem{thm}{Theorem}[section]
\newtheorem{lem}[thm]{Lemma}
\newtheorem{prop}[thm]{Proposition}
\theoremstyle{definition}
\newtheorem{example}{Example}[section]
\theoremstyle{remark}
\newtheorem*{rem}{Remark}
\newcommand{\Gal}{\mathrm{Gal}}
\newcommand{\ord}{\mathrm{ord}}
\newcommand{\QQ}{\mathbb Q}
\newcommand{\FF}{\mathbb F}
\newcommand{\KK}{\mathbb K}
\newcommand{\EE}{\mathbb E}
\newcommand{\ZZ}{\mathbb Z}
\newcommand{\LL}{\mathbb L}
\newcommand{\CO}{\mathcal O}
\newcommand{\FP}{\mathfrak P}
\newcommand{\Fp}{\mathfrak p}
\newcommand{\bi}{\mathbf i}
\begin{document}
\title{On the Generality of \mbox{$1+\bi$} as a Non-Norm Element\thanks{This work was supported by the Stanford Graduate Fellowship. The material in this paper was submitted in part for presentations at the 2010 IEEE International Symposium on Information Theory.}}
\author{Hua-Chieh Li\thanks{Department of Mathematics, National Taiwan Normal University, Taipei 116, Taiwan. E-mail: li@math.ntnu.edu.tw.}
\and
Ming-Yang Chen\thanks{Department of Electrical Engineering, Stanford University, Stanford, CA 94305, U.S.A. E-mail: chenmy@stanford.edu.}
\and
John M. Cioffi\thanks{Department of Electrical Engineering, Stanford University, Stanford, CA 94305, U.S.A. E-mail: cioffi@stanford.edu.}}
\maketitle
\begin{abstract}
  Full-rate space-time block codes with nonvanishing determinants have been extensively designed with cyclic division algebras. For these designs, smaller pairwise error probabilities of maximum likelihood detections require larger normalized diversity products, which can be obtained by choosing integer non-norm elements with smaller absolute values. All known methods have constructed \mbox{$1+\bi$} and \mbox{$2+\bi$} to be integer non-norm elements with the smallest absolute values over QAM for the number of transmit antennas \mbox{$n$}: \mbox{$\{n:5\leq n\leq 40,8\nmid n\}$} and \mbox{$\{n:5\leq n\leq 40,8\mid n\}$}, respectively. Via explicit constructions, this paper proves that \mbox{$1+\bi$} is an integer non-norm element with the smallest absolute value over QAM for every \mbox{$n\geq 5$}.
\end{abstract}
\section{Introduction}
  A sufficient condition, called {\em nonvanishing determinants}, is shown in~\cite{EKPKL06} for a full-rate (\mbox{$n^2$} input symbols in an \mbox{$n\times n$} transmission matrix) space-time block code achieving the optimal diversity-multiplexing gains tradeoff~\cite{ZT03}. Full-rate space-time block codes with nonvanishing determinants have been extensively designed with cyclic division algebras, e.g.,~\cite{EKPKL06,GX09,KR05,LC09}. For these designs, smaller pairwise error probabilities of maximum likelihood detections require larger normalized diversity products, which can be obtained by choosing integer non-norm elements with smaller absolute values~\cite{GX09,LC09}. All known methods have constructed \mbox{$1+\bi$} and \mbox{$2+\bi$} to be integer non-norm elements with the smallest absolute values over quadrature amplitude modulations (QAM) for the number of transmit antennas \mbox{$n$}: \mbox{$\{n:5\leq n\leq 40,8\nmid n\}$} and \mbox{$\{n:5\leq n\leq 40,8\mid n\}$}, respectively~\cite{LC09}.

  Via explicit constructions, this paper proves that \mbox{$1+\bi$} is an integer non-norm element with the smallest absolute value over QAM for every \mbox{$n\geq 5$}. Section~\ref{prelim} introduces some fundamental definitions in algebraic number theory and discusses their properties, which are helpful in deriving the new constructional procedure in Section~\ref{method}. Section~\ref{numerical} confirms numerically the improvement in normalized diversity products by adopting \mbox{$1+\bi$} as a non-norm element instead of \mbox{$2+\bi$}. Finally, conclusions are drawn in Section~\ref{conc}.

  Throughout this article, \mbox{$\QQ$} and \mbox{$\ZZ$} mean the field consisting of all rational numbers and the ring consisting of all integers, respectively. For a field \mbox{$\EE$}, the sets of all algebraic integers and nonzero elements therein are denoted by \mbox{$\CO_\EE$} and \mbox{$\EE^*$}, respectively. Moreover, \mbox{$\EE_\Fp$} represents the completion of \mbox{$\EE$} with valuation corresponding to a nonzero prime ideal \mbox{$\Fp$} of \mbox{$\CO_\EE$}. \mbox{$\zeta_m$} is a primitive \mbox{$m$-th} root of unity.
\section{Preliminary Knowledge in Algebraic Number Theory}
\label{prelim}
  The following two paragraphs briefly mention some useful tools in ramification theory~\cite[Ch. I, Sec. 6]{J96}.

  Let \mbox{$\EE$} be a number field and \mbox{$\FF$} be an abelian extension over \mbox{$\EE$} with degree \mbox{$n$}. A nonzero prime ideal \mbox{$\FP$} of \mbox{$\CO_\FF$} is said to {\em lie over} another nonzero prime ideal \mbox{$\Fp$} of \mbox{$\CO_\EE$}, written as \mbox{$\FP\mid \Fp$}, if \mbox{$\FP\cap \CO_\EE=\Fp$}. Each \mbox{$\Fp\CO_\FF$} with \mbox{$\Fp$} a nonzero prime ideal of \mbox{$\CO_\EE$} has the unique (up to a reindexing) factorization in \mbox{$\CO_\FF$}: \mbox{$\Fp\CO_\FF=(\FP_1\cdots\FP_g)^{e}$} where \mbox{$\FP_1,\ldots,\FP_g$} are distinct nonzero prime ideals of \mbox{$\CO_\FF$} with \mbox{$\FP_1,\ldots,\FP_g\mid\Fp$}, and \mbox{$e$} is a positive integer known as the {\em ramification index} of each \mbox{$\FP_1,\ldots,\FP_g$} over \mbox{$\Fp$}. Since \mbox{$\CO_\EE$} and \mbox{$\CO_\FF$} are both Dedekind domains, \mbox{$\Fp$} is a maximal ideal in \mbox{$\CO_\EE$} and likewise for those \mbox{$\FP_1,\ldots,\FP_g$} in \mbox{$\CO_\FF$}, implying that \mbox{$\CO_\EE/\Fp$} and \mbox{$\CO_\FF/\FP_1,\ldots,\CO_\FF/\FP_g$} are all fields. In fact, each \mbox{$\CO_\FF/\FP_i$} is a field extension of \mbox{$\CO_\EE/\Fp$} with the same degree \mbox{$f=[\CO_\FF/\FP_i:\CO_\EE/\Fp]$}, called the {\em residue class degree} of \mbox{$\FP_i$} over \mbox{$\Fp$}, such that
\begin{equation}
\label{sum}
  efg=n.
\end{equation}
  Of particular interests are the two extreme cases: \mbox{$f=n$} or \mbox{$e=n$}. In either one,~(\ref{sum}) automatically acknowledges that there is exactly one nonzero prime ideal of \mbox{$\CO_\FF$} lying over \mbox{$\Fp$}, i.e., \mbox{$g=1$}. If \mbox{$f=n$} then \mbox{$\Fp$} is said to be {\em inert} in \mbox{$\FF/\EE$}; if \mbox{$e=n$} then \mbox{$\Fp$} is said to be {\em totally ramified} in \mbox{$\FF/\EE$}.

  Suppose further that \mbox{$\LL/\EE$} is another abelian extension where \mbox{$\FF/\EE$} is a sub-extension. Let \mbox{$\Fp$}, \mbox{$\FP$}, and \mbox{$\wp$} be nonzero prime ideals of \mbox{$\CO_\EE$}, \mbox{$\CO_\FF$}, and \mbox{$\CO_\LL$}, respectively, with \mbox{$\FP\mid \Fp$} and \mbox{$\wp\mid \FP$}. The respective ramification indices \mbox{$e$}, \mbox{$e'$}, and \mbox{$e''$} of \mbox{$\wp$} over \mbox{$\Fp$}, \mbox{$\wp$} over \mbox{$\FP$}, and \mbox{$\FP$} over \mbox{$\Fp$}, and the respective residue class degrees \mbox{$f$}, \mbox{$f'$}, and \mbox{$f''$} of \mbox{$\wp$} over \mbox{$\Fp$}, \mbox{$\wp$} over \mbox{$\FP$}, and \mbox{$\FP$} over \mbox{$\Fp$} satisfy
\begin{equation}
\label{tower}
  e=e'e'' \mbox{ and } f=f'f''.
\end{equation}
  Together~(\ref{sum}) with~(\ref{tower}), if a nonzero prime ideal \mbox{$\Fp$} of \mbox{$\CO_\EE$} is inert (respectively, totally ramified) in \mbox{$\LL/\EE$}, then it is also inert (respectively, totally ramified) in every sub-extension \mbox{$\FF/\EE$} of \mbox{$\LL/\EE$}.

  Let's focus on the situation where \mbox{$\FF/\EE$} is a cyclic extension. Recall that the global norm \mbox{$N_{\FF/\EE}(\gamma)$} of \mbox{$\gamma\in\FF$} is $$N_{\FF/\EE}(\gamma)=\prod_{\sigma\in\Gal(\FF/\EE)}\sigma(\gamma)\in \EE.$$ For every \mbox{$\gamma\in\EE^*$}, \mbox{$N_{\FF/\EE}(\gamma)=\gamma^n$} implies \mbox{$\gamma^n\in N_{\FF/\EE}(\FF^*)$}. Hence, the order of \mbox{$\gamma$} modulo \mbox{$N_{\FF/\EE}(\FF^*)$} always divides \mbox{$n$}. An element \mbox{$\gamma\in\EE^*$} is called a {\em non-norm element} of \mbox{$\FF/\EE$} if the order of \mbox{$\gamma$} modulo \mbox{$N_{\FF/\EE}(\FF^*)$} is \mbox{$n$}. For a pair of nonzero prime ideals \mbox{$\Fp$} of \mbox{$\CO_\EE$} and \mbox{$\FP$} of \mbox{$\CO_\FF$} with \mbox{$\FP\mid\Fp$}, \mbox{$\FF_\FP/\EE_\Fp$} is a Galois extension with \mbox{$\Gal(\FF_\FP/\EE_\Fp)$} isomorphic to a subgroup of \mbox{$\Gal(\FF/\EE)$}~\cite[Ch. III, Thm. 1.2]{J96}. Thus \mbox{$\FF_\FP/\EE_\Fp$} is a cyclic extension, too. The local norm \mbox{$N_{\FF_\FP/\EE_\Fp}(\gamma)$} of \mbox{$\gamma\in\FF\subseteq\FF_\FP$} is $$N_{\FF_\FP/\EE_\Fp}(\gamma)= \prod_{\sigma\in\Gal(\FF_\FP/\EE_\Fp)}\sigma(\gamma)\in \EE_\Fp.$$ Lemma~\ref{hasse} connects global and local norms.
\begin{lem}[{\cite[Ch. V, Thm. 4.6]{J96}}]
\label{hasse}
  For each \mbox{$\gamma \in\EE^*$}, if \mbox{$\gamma\not\in N_{\FF_\FP/\EE_\Fp}(\FF_\FP^*)$} for some pair of nonzero prime ideals \mbox{$\Fp$} of \mbox{$\CO_\EE$} and \mbox{$\FP$} of \mbox{$\CO_\FF$} with \mbox{$\FP\mid \Fp$}, then \mbox{$\gamma\not\in N_{\FF/\EE}(\FF^*)$}.
\end{lem}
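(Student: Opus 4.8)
The plan is to prove the contrapositive, which is the \emph{easy} direction of the local--global principle for norms: assuming $\gamma\in N_{\FF/\EE}(\FF^*)$, I would show that $\gamma$ lies in $N_{\FF_\FP/\EE_\Fp}(\FF_\FP^*)$ for \emph{every} pair of nonzero prime ideals $\FP\mid\Fp$, in particular for the prescribed one. (The converse implication is the genuinely deep Hasse norm theorem and is not needed here.) So fix $\beta\in\FF^*$ with $\gamma=N_{\FF/\EE}(\beta)$ and fix a nonzero prime $\Fp$ of $\CO_\EE$.

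The key input is the compatibility of the global norm with completion. Let $\FP_1,\ldots,\FP_g$ be the primes of $\CO_\FF$ lying over $\Fp$. Viewing $\gamma$ as an element of $\EE_\Fp$ and $\beta$ as an element of each $\FF_{\FP_i}$ through the canonical embedding $\FF\hookrightarrow\FF_{\FP_i}$, one has
$$N_{\FF/\EE}(\beta)=\prod_{i=1}^{g} N_{\FF_{\FP_i}/\EE_\Fp}(\beta).$$
This follows from the decomposition $\FF\otimes_\EE\EE_\Fp\cong\FF_{\FP_1}\times\cdots\times\FF_{\FP_g}$ of this semisimple $\EE_\Fp$-algebra into its local components: under this isomorphism the $\EE_\Fp$-linear endomorphism given by multiplication by $\beta\otimes 1$ becomes block diagonal, with the $i$-th block being multiplication by (the image of) $\beta$ on $\FF_{\FP_i}$; since the determinant of multiplication by $\beta$ is unchanged by the base change $\EE\subseteq\EE_\Fp$, comparing determinants gives the displayed identity. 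Each factor on the right lies in $\EE_\Fp^*$ because $\beta\neq 0$ has nonzero image in every $\FF_{\FP_i}$.

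Now I would use that $\FF/\EE$ is cyclic, hence Galois, so $\Gal(\FF/\EE)$ acts transitively on $\{\FP_1,\ldots,\FP_g\}$: for each $i$ pick $\tau_i\in\Gal(\FF/\EE)$ with $\tau_i(\FP_1)=\FP_i$; passing to completions, $\tau_i$ induces an $\EE_\Fp$-algebra isomorphism $\FF_{\FP_1}\cong\FF_{\FP_i}$. Consequently all the local norm groups $N_{\FF_{\FP_i}/\EE_\Fp}(\FF_{\FP_i}^*)$ coincide with one fixed subgroup $H$ of $\EE_\Fp^*$. (Since $\Gal(\FF/\EE)$ is abelian the relevant decomposition subgroups are in fact literally equal, but only the isomorphism of completions is used here.) The displayed identity then exhibits $\gamma$ as a product of $g$ elements of the subgroup $H$, whence $\gamma\in H=N_{\FF_\FP/\EE_\Fp}(\FF_\FP^*)$ for the prescribed $\FP$. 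This proves the contrapositive, and hence the lemma.

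The step I expect to be the main obstacle is the norm--completion identity displayed above. It is entirely standard, but a clean justification rests on the isomorphism $\FF\otimes_\EE\EE_\Fp\cong\prod_{\FP\mid\Fp}\FF_\FP$ and on the behaviour of determinants under a product decomposition of algebras, material not set up in the preceding paragraphs. In a self-contained treatment one would either invoke it as a known fact alongside \cite{J96} or insert a short lemma deriving it from the structure of $\FF\otimes_\EE\EE_\Fp$. The remaining ingredients---transitivity of the Galois action on the primes above $\Fp$, and closure of a subgroup under multiplication---are routine.
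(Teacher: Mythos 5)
Your proof is correct. Note that the paper itself offers no proof of this lemma: it is quoted directly from Janusz's book as \cite[Ch.\ V, Thm.\ 4.6]{J96} (the Hasse norm theorem for cyclic extensions), so there is nothing in the paper to compare against. What you supply is the ``easy'' half of that theorem, which is precisely the direction invoked here, and your argument follows the standard route: reduce to the contrapositive, use the decomposition $\FF\otimes_\EE\EE_\Fp\cong\prod_{\FP\mid\Fp}\FF_\FP$ together with invariance of determinants under base change to get $N_{\FF/\EE}(\beta)=\prod_i N_{\FF_{\FP_i}/\EE_\Fp}(\beta)$, and then use transitivity of the Galois action on the primes above $\Fp$ to conclude all local norm groups coincide, so that the product lands in the single subgroup $N_{\FF_\FP/\EE_\Fp}(\FF_\FP^*)$. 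Each step is sound; in particular the transitivity step is genuinely needed, since without it the displayed identity only shows $\gamma$ is a product of elements coming from possibly different local norm groups. Your parenthetical remark that in the abelian case the decomposition groups are literally equal is accurate and gives a slightly cleaner way to see the same thing. The one dependency you correctly flag---the norm--completion compatibility---is exactly the nontrivial external input; the rest (Galois transitivity, subgroup closure) is routine. In short: this is a correct, self-aware proof of exactly the direction of the Hasse norm theorem that the paper uses, matching the standard textbook treatment the paper cites.
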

  Lemma~\ref{comp} shows that it suffices to consider those cyclic extensions with degrees equal to prime powers and a pre-specified non-norm element.
\begin{lem}
\label{comp}
  Given two cyclic extensions \mbox{$\FF/\EE$} and \mbox{$\KK/\EE$} with degrees \mbox{$n_1$} and \mbox{$n_2$}, respectively, if \mbox{$\gcd(n_1,n_2)=1$} and \mbox{$\gamma\in \EE^*$} is a non-norm element of both extensions, then \mbox{$\FF\KK/\EE$} is a cyclic extension with degree \mbox{$n_1n_2$} and \mbox{$\gamma$} a non-norm element.
\end{lem}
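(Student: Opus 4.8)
The plan is to treat the two assertions separately: first that $\FF\KK/\EE$ is cyclic of degree $n_1n_2$, which is pure Galois theory, and then that $\gamma$ remains a non-norm element, which I expect to follow from transitivity of the norm together with an elementary order computation.

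First I would establish $\FF\cap\KK=\EE$. The field $\FF\cap\KK$ is an intermediate extension of $\EE$ whose degree divides both $n_1=[\FF:\EE]$ and $n_2=[\KK:\EE]$, so $\gcd(n_1,n_2)=1$ forces $[\FF\cap\KK:\EE]=1$. Since $\FF/\EE$ and $\KK/\EE$ are abelian, hence Galois, restriction gives a homomorphism $\Gal(\FF\KK/\EE)\to\Gal(\FF/\EE)\times\Gal(\KK/\EE)$, $\sigma\mapsto(\sigma|_\FF,\sigma|_\KK)$; it is injective because an automorphism fixing both $\FF$ and $\KK$ fixes $\FF\KK$. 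From $\FF\cap\KK=\EE$ one gets $[\FF\KK:\EE]=[\FF\KK:\KK]\,[\KK:\EE]=[\FF:\EE]\,[\KK:\EE]=n_1n_2$, matching the order of the target, so the map is an isomorphism. As $\Gal(\FF/\EE)\cong\ZZ/n_1\ZZ$ and $\Gal(\KK/\EE)\cong\ZZ/n_2\ZZ$ are cyclic of coprime orders, their product is cyclic of order $n_1n_2$ by the Chinese remainder theorem, so $\FF\KK/\EE$ is cyclic of degree $n_1n_2$.

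For the non-norm claim, the key input is transitivity of the norm along the tower $\EE\subseteq\FF\subseteq\FF\KK$, namely $N_{\FF\KK/\EE}=N_{\FF/\EE}\circ N_{\FF\KK/\FF}$, which yields $N_{\FF\KK/\EE}((\FF\KK)^*)\subseteq N_{\FF/\EE}(\FF^*)$; symmetrically, using $\EE\subseteq\KK\subseteq\FF\KK$, one gets $N_{\FF\KK/\EE}((\FF\KK)^*)\subseteq N_{\KK/\EE}(\KK^*)$. Let $d$ be the order of $\gamma$ modulo $N_{\FF\KK/\EE}((\FF\KK)^*)$; as recalled in the text, $d\mid n_1n_2$. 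Since $\gamma^d\in N_{\FF\KK/\EE}((\FF\KK)^*)\subseteq N_{\FF/\EE}(\FF^*)$, the order of $\gamma$ modulo $N_{\FF/\EE}(\FF^*)$—which is $n_1$ because $\gamma$ is a non-norm element of $\FF/\EE$—divides $d$; likewise $n_2\mid d$. From $\gcd(n_1,n_2)=1$ we obtain $n_1n_2\mid d$, and combined with $d\mid n_1n_2$ this gives $d=n_1n_2$, i.e., $\gamma$ is a non-norm element of $\FF\KK/\EE$.

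I do not anticipate a genuine obstacle: the only points needing care are the elementary identification $\FF\cap\KK=\EE$ (and the resulting degree/order count) and keeping the divisibility bookkeeping straight. In particular, no local–global input such as Lemma~\ref{hasse} is required here; transitivity of the global norm does all the work.
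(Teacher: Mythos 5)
Your proposal is correct and follows essentially the same route as the paper: transitivity of the norm gives $N_{\FF\KK/\EE}((\FF\KK)^*)\subseteq N_{\FF/\EE}(\FF^*)$ (and symmetrically for $\KK$), and an elementary order/divisibility computation in $\EE^*$ modulo the norm subgroups finishes the argument. The only cosmetic differences are that you argue directly via $n_1\mid d$, $n_2\mid d$ rather than by contradiction with a prime divisor of $n_1n_2$, and that you spell out the Galois-theoretic fact that $\FF\KK/\EE$ is cyclic of degree $n_1n_2$, which the paper simply asserts.
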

\begin{proof}
  From \mbox{$\gcd(n_1,n_2)=1$}, \mbox{$\FF\KK/\EE$} is a cyclic extension with degree \mbox{$n_1n_2$}. Write \mbox{$\LL=\FF\KK$}. Suppose, on the contrary, that \mbox{$\gamma$} is not a non-norm element of \mbox{$\LL/\EE$}; i.e., the order of \mbox{$\gamma$} modulo \mbox{$N_{\LL/\EE}(\LL^*)$} is a proper divisor of \mbox{$n_1n_2$}. Then there exists a prime number \mbox{$q\mid n_1n_2$} such that $$\gamma^{n_1n_2/q}\in N_{\LL/\EE}(\LL^*).$$ Assume that \mbox{$q\mid n_1$} without loss of generality. By the transitivity of norm~\cite[Ch. I, Cor. 5.4]{J96}, $$\gamma^{n_1n_2/q}\in N_{\FF/\EE}(N_{\LL/\FF}(\LL^*))\subseteq N_{\FF/\EE}(\FF^*).$$ In other words, the order of \mbox{$\gamma$} modulo \mbox{$N_{\FF/\EE}(\FF^*)$} has to divide \mbox{$n_1n_2/q$}. It also divides \mbox{$[\FF:\EE]=n_1$}. The order of \mbox{$\gamma$} modulo \mbox{$N_{\FF/\EE}(\FF^*)$} must be a divisor of \mbox{$\gcd(n_1n_2/q,n_1)=n_1/q$}, a contradiction to \mbox{$\gamma$} being a non-norm element of \mbox{$\FF/\EE$}.
\end{proof}

  If a nonzero prime ideal \mbox{$\Fp$} of \mbox{$\CO_\EE$} is totally ramified in \mbox{$\FF/\EE$}, and the extension degree \mbox{$[\FF:\EE]$} is not divisible by the characteristic of field \mbox{$\CO_\EE/\Fp$}, then \mbox{$\Fp$} is said to be {\em totally and tamely ramified} in \mbox{$\FF/\EE$}. Some later serviceable instances are illustrated in Example~\ref{ttr}.
\begin{example}
\label{ttr}
  Let \mbox{$p$} be a prime number. Then \mbox{$[\QQ(\zeta_p):\QQ]=p-1$} and \mbox{$p$} is totally ramified in \mbox{$\QQ(\zeta_p)/\QQ$}~\cite[Ch. I, Thm. 10.1]{J96}. Since the characteristic of \mbox{$\ZZ/p\ZZ$} is \mbox{$p$}, \mbox{$p$} is totally and tamely ramified in \mbox{$\QQ(\zeta_p)/\QQ$}.

  If further \mbox{$p\equiv 1\pmod{4}$} then there exist two positive integers \mbox{$a$} and \mbox{$b$} such that \mbox{$p=a^2+b^2$}, leading to exactly two distinct nonzero prime ideals of \mbox{$\CO_{\QQ(\bi)}=\ZZ[\bi]$} lying over \mbox{$p$}: \mbox{$(a+b\bi)\ZZ[\bi]$} and \mbox{$(a-b\bi)\ZZ[\bi]$}. Let \mbox{$\Fp$} be either of them. By~(\ref{sum}), \mbox{$\Fp$} has ramification index \mbox{$1$} over \mbox{$p$}. Fix a nonzero prime ideal \mbox{$\wp$} of \mbox{$\CO_{\QQ(\zeta_p,\bi)}$} with \mbox{$\wp\mid \Fp$} and let \mbox{$\FP=\wp\cap\CO_{\QQ(\zeta_p)}$} accordingly. Then \mbox{$\FP\mid p$}. Since \mbox{$p$} is totally ramified in \mbox{$\QQ(\zeta_p)/\QQ$}, \mbox{$\FP$} has ramification index \mbox{$p-1$} over \mbox{$p$}. By~(\ref{tower}), the tower of field extensions \mbox{$\QQ\subset \QQ(\zeta_p)\subset \QQ(\zeta_p,\bi)$} gives that the ramification index of \mbox{$\wp$} over \mbox{$p$} is at least \mbox{$p-1$}. Since \mbox{$\Fp$} has ramification index \mbox{$1$} over \mbox{$p$}, again by~(\ref{tower}) the tower of field extensions \mbox{$\QQ\subset\QQ(\bi)\subset\QQ(\zeta_p,\bi)$} yields that the ramification index of \mbox{$\wp$} over \mbox{$\Fp$} is at least \mbox{$p-1$}. From~(\ref{sum}) and \mbox{$[\QQ(\zeta_p,\bi):\QQ(\bi)]=p-1$}, the ramification index of \mbox{$\wp$} over \mbox{$\Fp$} must be \mbox{$p-1$}; i.e., \mbox{$\Fp$} is totally ramified in \mbox{$\QQ(\zeta_p,\bi)/\QQ(\bi)$}. Moreover, \mbox{$\ZZ[\bi]/\Fp$} is a field extension of \mbox{$\ZZ/p\ZZ$}; thus the characteristic of \mbox{$\ZZ[\bi]/\Fp$} must be \mbox{$p$}. \mbox{$\Fp$} is totally and tamely ramified in \mbox{$\QQ(\zeta_p,\bi)/\QQ(\bi)$}.
\end{example}
  Theorem~\ref{locnon} presents our new sufficient condition for obtaining non-norm elements.
\begin{thm}
\label{locnon}
  Let \mbox{$q$} be a prime number, \mbox{$k$} be a positive integer, \mbox{$\FF/\EE$} be a cyclic extension with degree \mbox{$q^k$}, and \mbox{$\Fp$} be totally and tamely ramified in \mbox{$\FF/\EE$}. If \mbox{$\gamma\in\CO_\EE\setminus\Fp$} and \mbox{$x^q\equiv \gamma\pmod{\Fp}$} is not solvable in \mbox{$\CO_\EE$}, then \mbox{$\gamma$} is a non-norm element of \mbox{$\FF/\EE$}.
\end{thm}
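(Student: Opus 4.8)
The plan is to localize at $\Fp$ and apply Lemma~\ref{hasse}. Since $\Fp$ is totally ramified in $\FF/\EE$, there is a unique prime $\FP$ of $\CO_\FF$ lying over $\Fp$, and $\FF_\FP/\EE_\Fp$ is cyclic, totally and tamely ramified, of degree $q^k$; moreover its residue class degree is $1$, so $\CO_\FF/\FP=\CO_\EE/\Fp=:\kappa$. Since the order of $\gamma$ modulo $N_{\FF/\EE}(\FF^*)$ divides $q^k$ (as recalled above), it equals $q^k$ exactly when $\gamma^{q^{k-1}}\notin N_{\FF/\EE}(\FF^*)$; and by Lemma~\ref{hasse} applied to $\gamma^{q^{k-1}}\in\EE^*$, it suffices to prove
\[
  \gamma^{q^{k-1}}\notin N_{\FF_\FP/\EE_\Fp}(\FF_\FP^*).
\]

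Put $K=\EE_\Fp$ and $L=\FF_\FP$. I would first record two facts about $L/K$. First, since the residue class degree is $1$, every $\sigma\in\Gal(L/K)$ acts trivially on $\kappa$; hence, for a unit $u\in\CO_L^*$, reduction modulo the maximal ideal gives $\overline{N_{L/K}(u)}=\bar u^{q^k}$ in $\kappa^*$. Second, since $L/K$ is a cyclic, totally and tamely ramified extension of degree $q^k$, the tame inertia character embeds $\Gal(L/K)$ into $\kappa^*$ (equivalently, $\mu_{q^k}\subseteq K$, which then injects into $\kappa^*$), so $q^k\mid|\kappa^*|$.

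I would then assume, toward a contradiction, that $\gamma^{q^{k-1}}=N_{L/K}(\beta)$ for some $\beta\in L^*$. Comparing valuations and using that the residue class degree is $1$,
\[
  v_L(\beta)=v_K\bigl(N_{L/K}(\beta)\bigr)=v_K\bigl(\gamma^{q^{k-1}}\bigr)=0,
\]
the last equality because $\gamma\in\CO_\EE\setminus\Fp$ is a unit at $\Fp$; thus $\beta\in\CO_L^*$. Reducing modulo the maximal ideal and invoking the first fact yields $\bar\gamma^{q^{k-1}}=\bar\beta^{q^k}$ in $\kappa^*$, so $\bar\gamma\bar\beta^{-q}$ is a $q^{k-1}$-th root of unity there. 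A short computation in the cyclic group $\kappa^*$, using $q^k\mid|\kappa^*|$ from the second fact, shows that every $q^{k-1}$-th root of unity in $\kappa^*$ is a $q$-th power in $\kappa^*$; hence $\bar\gamma$ is itself a $q$-th power in $\kappa^*$. Lifting a $q$-th root of $\bar\gamma$ through the reduction $\CO_\EE\to\CO_\EE/\Fp=\kappa$ produces $x\in\CO_\EE$ with $x^q\equiv\gamma\pmod{\Fp}$, contradicting the hypothesis. Therefore $\gamma^{q^{k-1}}\notin N_{\FF_\FP/\EE_\Fp}(\FF_\FP^*)$, and the reduction of the first paragraph finishes the proof.

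I expect the only genuinely delicate point to be the second fact above — the implication that a cyclic, totally and tamely ramified extension of degree $q^k$ forces $q^k\mid|\kappa^*|$. This is exactly where tameness enters: through the injectivity of the tame inertia character $\Gal(L/K)\to\kappa^*$, $\sigma\mapsto\overline{\sigma(\varpi)/\varpi}$ for a uniformizer $\varpi$ of $L$ (its kernel being the wild inertia, which is trivial here), or equivalently through the structure theorem $L=K(\sqrt[q^k]{\varpi})$ for a uniformizer $\varpi$ of $K$, which for a Galois $L/K$ pins down $\mu_{q^k}\subseteq K$. The rest — the valuation bookkeeping, the behaviour of the norm on units under reduction, and the closing cyclic-group manipulation — is routine, and the hypothesis on $\gamma$ is invoked only in the final line.
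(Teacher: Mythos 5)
Your proof is correct and takes essentially the same route as the paper: reduce via Lemma~\ref{hasse} to showing $\gamma^{q^{k-1}}$ is not a local norm at the totally and tamely ramified prime, translate this into the statement that $\bar\gamma^{q^{k-1}}$ is a $q^k$-th power in $(\CO_\EE/\Fp)^*$, and finish with the same cyclic-group computation using $q^k\mid\left|(\CO_\EE/\Fp)^*\right|$. The only difference is one of self-containment: where the paper cites \cite{FV02} for the norm characterization of units in a tame totally ramified local extension and for $q^k\mid\left|(\CO_\EE/\Fp)^*\right|$, you derive the needed direction directly from residue degree $1$ (trivial Galois action on the residue field plus the valuation count) and from the tame inertia character.
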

\begin{proof}
  Since \mbox{$\Fp$} is totally ramified in \mbox{$\FF/\EE$}, there is a unique nonzero prime ideal \mbox{$\FP$} of \mbox{$\CO_\FF$} with \mbox{$\FP\mid\Fp$}. By Lemma~\ref{hasse}, to show that \mbox{$\gamma$} is a non-norm element of \mbox{$\FF/\EE$}, it suffices to prove $$\gamma^{q^{k-1}}\not\in N_{\FF_\FP/\EE_\Fp}(\FF_\FP^*).$$ The condition \mbox{$\gamma\in\CO_\EE\setminus\Fp$} gives $$\gamma^{q^{k-1}}\in\CO_{\EE_\Fp}\setminus\Fp\CO_{\EE_\Fp};$$ the condition that \mbox{$\Fp$} is totally and tamely ramified in \mbox{$\FF/\EE$} implies that \mbox{$\FF_\FP/\EE_\Fp$} is a totally and tamely ramified extension with degree \mbox{$q^k$}~\cite[Ch. II, Thm. 3.8]{J96}. Thereby every \mbox{$\alpha\in\CO_{\EE_\Fp}\setminus\Fp\CO_{\EE_\Fp}$} satisfies \mbox{$\alpha\in N_{\FF_\FP/\EE_\Fp}(\FF_\FP^*)$} if and only if $$\alpha\equiv \beta^{q^k}\pmod{\Fp\CO_{\EE_\Fp}}$$ for some \mbox{$\beta\in\CO_{\EE_\Fp}$}~\cite[Ch. IV, Sec. 1.5]{FV02}. Together with \mbox{$\CO_{\EE_\Fp}/\Fp\CO_{\EE_\Fp}\simeq\CO_{\EE}/\Fp$}~\cite[Ch. II, Cor. 2.7]{J96}, it leaves to show that
\begin{equation}
\label{coneq}
  x^{q^k}\equiv \gamma^{q^{k-1}}\pmod{\Fp}
\end{equation}
  is not solvable in \mbox{$\CO_\EE$}. Suppose, on the contrary, that~(\ref{coneq}) has a solution \mbox{$\theta\in \CO_\EE$}. Since \mbox{$\CO_\EE/\Fp$} is a finite field, \mbox{$(\CO_\EE/\Fp)^*$} is a cyclic group. Let \mbox{$\lambda\in\CO_\EE\setminus \Fp$} be a generator of \mbox{$(\CO_\EE/\Fp)^*$}. Then \mbox{$\theta\equiv\lambda^s\pmod{\Fp}$} and \mbox{$\gamma\equiv\lambda^t\pmod{\Fp}$} for some positive integers \mbox{$s$} and \mbox{$t$} such that $$\lambda^{s q^k}\equiv\lambda^{tq^{k-1}}\pmod{\Fp};$$ i.e., the order of \mbox{$(\CO_\EE/\Fp)^*$} divides \mbox{$q^{k-1}(sq-t)$}. On the other hand, \mbox{$q^k$} must divide the order of \mbox{$(\CO_\EE/\Fp)^*$} because \mbox{$\Fp$} is totally and tamely ramified in \mbox{$\FF/\EE$}~\cite[Ch. IV, Sec. 1.5]{FV02}. Therefore \mbox{$q\mid t$} and \mbox{$\lambda^{t/q}$} becomes a solution of \mbox{$x^q\equiv \gamma\pmod{\Fp}$} in \mbox{$\CO_\EE$}, a contradiction.
\end{proof}
\section{The Generality of \mbox{$1+\bi$} as a Non-Norm Element}
\label{method}
  This section first shows that for every odd \mbox{$n$}, a cyclic extension over \mbox{$\QQ(\bi)$} with degree \mbox{$n$} and \mbox{$1+\bi$} a non-norm element can always be constructed. This paper aims to establish such an extension for each positive integer \mbox{$n$}. According to Lemma~\ref{comp}, it suffices to further construct a cyclic extension over \mbox{$\QQ(\bi)$} with degree \mbox{$2^k$} and \mbox{$1+\bi$} a non-norm element for each positive integer \mbox{$k$}.

  Lemma~\ref{spl} has been used extensively in~\cite{EKPKL06,GX09,KR05,LC09}.
\begin{lem}[{\cite[Thm. 1]{KR05}}]
\label{spl}
  Let \mbox{$\FF/\QQ$} be a cyclic extension with degree \mbox{$n$} and \mbox{$\FF\cap\QQ(\bi)=\QQ$}, \mbox{$\ell$} be an inert prime number in \mbox{$\FF/\QQ$}, and \mbox{$\Fp$} be a nonzero prime ideal of \mbox{$\ZZ[\bi]$} with \mbox{$\Fp\mid \ell$} and residue class degree $1$ over $\ell$. If \mbox{$\gamma\in\ZZ[\bi]$} satisfies \mbox{$\gamma\in\Fp\setminus\Fp^2$}, then \mbox{$\gamma$} is a non-norm element of \mbox{$\FF(\bi)/\QQ(\bi)$}.
\end{lem}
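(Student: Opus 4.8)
The plan is to localize at the prime $\Fp$ and feed the outcome into the local--global principle of Section~\ref{prelim}. First record the global picture: since $\FF\cap\QQ(\bi)=\QQ$, restriction of automorphisms gives an isomorphism $\Gal(\FF(\bi)/\QQ(\bi))\cong\Gal(\FF/\QQ)$, so $\FF(\bi)/\QQ(\bi)$ is cyclic of degree $n$. In particular, as noted just before Lemma~\ref{hasse}, the order of $\gamma\in\QQ(\bi)^{*}$ modulo $N_{\FF(\bi)/\QQ(\bi)}(\FF(\bi)^{*})$ divides $n$, so it suffices to show this order is not a proper divisor of $n$; by Lemma~\ref{hasse} it is enough to exhibit a prime $\FP\mid\Fp$ of $\CO_{\FF(\bi)}$ such that $\gamma^{m}$ fails to be a local norm in $\FF(\bi)_{\FP}/\QQ(\bi)_{\Fp}$ whenever $n\nmid m$.

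Next, analyze the splitting of $\Fp$ in $\FF(\bi)/\QQ(\bi)$. Since $\ell$ is inert in $\FF/\QQ$, the ideal $\ell\CO_{\FF}$ is prime, and its completion is the unramified extension of $\QQ_{\ell}$ of degree $n$, with residue field $\FF_{\ell^{n}}$. Since $\Fp$ has residue class degree $1$ over $\ell$, the completion $\QQ(\bi)_{\Fp}$ has residue field $\FF_{\ell}$. Pick a prime $\FP$ of $\CO_{\FF(\bi)}$ with $\FP\mid\Fp$. Then $\FF(\bi)_{\FP}$ contains the compositum of $\FF_{\ell\CO_{\FF}}$ with $\QQ(\bi)_{\Fp}$, which is unramified over $\QQ(\bi)_{\Fp}$ (a compositum of an unramified extension with $\QQ(\bi)_{\Fp}$) and has residue field equal to the compositum $\FF_{\ell^{n}}\FF_{\ell}=\FF_{\ell^{n}}$, hence residue class degree $n$ over $\Fp$; thus this compositum is \emph{the} unramified extension of $\QQ(\bi)_{\Fp}$ of degree $n$. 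On the other hand $[\FF(\bi)_{\FP}:\QQ(\bi)_{\Fp}]\le[\FF(\bi):\QQ(\bi)]=n$, so $\FF(\bi)_{\FP}$ \emph{equals} that unramified degree-$n$ extension; equivalently, $\Fp$ is inert in $\FF(\bi)/\QQ(\bi)$ and $\FP$ is its unique prime over $\Fp$.

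Then compute the local norm group and conclude. For an unramified extension $\FF(\bi)_{\FP}/\QQ(\bi)_{\Fp}$ of degree $n$, a uniformizer $\pi$ of $\QQ(\bi)_{\Fp}$ is also a uniformizer of $\FF(\bi)_{\FP}$, with $N_{\FF(\bi)_{\FP}/\QQ(\bi)_{\Fp}}(\pi)=\pi^{n}$, and the norm is surjective on units; hence $N_{\FF(\bi)_{\FP}/\QQ(\bi)_{\Fp}}(\FF(\bi)_{\FP}^{*})$ is exactly the set of elements of $\QQ(\bi)_{\Fp}^{*}$ whose normalized $\Fp$-valuation lies in $n\ZZ$. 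The hypothesis $\gamma\in\Fp\setminus\Fp^{2}$ says precisely that the $\Fp$-valuation of $\gamma$ is $1$, so $\gamma^{m}$ is a local norm at $\Fp$ if and only if $n\mid m$. Consequently, if $\gamma^{m}\in N_{\FF(\bi)/\QQ(\bi)}(\FF(\bi)^{*})$ then by Lemma~\ref{hasse} $\gamma^{m}$ is a local norm at $\Fp$, forcing $n\mid m$; combined with the fact that the order of $\gamma$ modulo the global norm group divides $n$, that order equals $n$, i.e., $\gamma$ is a non-norm element of $\FF(\bi)/\QQ(\bi)$.

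The one step carrying genuine content is the identification of $\FF(\bi)_{\FP}$ as the \emph{full} unramified degree-$n$ extension of $\QQ(\bi)_{\Fp}$: this is where both the inertness of $\ell$ in $\FF/\QQ$ and the residue-class-degree-$1$ hypothesis on $\Fp$ enter, together with the inequality $[\FF(\bi)_{\FP}:\QQ(\bi)_{\Fp}]\le n$. One should also double-check the boundary case $\ell=2$, where $\Fp=(1+\bi)$ is ramified over $\ell$; the compositum-of-unramified-extensions argument applies there without change, so the conclusion is unaffected. Everything else is the bookkeeping of orders already carried out in the proof of Lemma~\ref{comp}, together with the standard description of the norm group of an unramified local extension (as in the references on local fields cited in Section~\ref{prelim}).
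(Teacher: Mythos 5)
The paper does not actually prove Lemma~\ref{spl}: it is quoted directly from~\cite[Thm.~1]{KR05}, so there is no in-paper proof to compare against. Your argument is correct and is, up to presentation, the standard proof of that theorem and the one given in~\cite{KR05}: linear disjointness gives $\FF(\bi)/\QQ(\bi)$ cyclic of degree $n$; the hypotheses that $\ell$ is inert in $\FF/\QQ$ and that $\Fp$ has residue degree $1$ over $\ell$ force (via~(\ref{tower}) and~(\ref{sum}), or equivalently your compositum computation) $\Fp$ to be inert in $\FF(\bi)/\QQ(\bi)$, so the local extension at $\FP$ is the unramified degree-$n$ extension of $\QQ(\bi)_\Fp$; the local norm group is then exactly the elements of $\Fp$-valuation in $n\ZZ$, and $v_\Fp(\gamma)=1$ gives the obstruction, which Lemma~\ref{hasse} promotes to a global one. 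Your separate check of $\ell=2$ is well placed but, as you note, changes nothing because the ramification of $\Fp$ over $\ell$ is irrelevant to the unramifiedness of $\FF(\bi)_\FP/\QQ(\bi)_\Fp$. No gaps.
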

  Denote \mbox{$\varphi(\cdot)$} as Euler's totient function. If \mbox{$m$} is an odd prime power then \mbox{$\QQ(\zeta_m)/\QQ$} is a cyclic extension with degree \mbox{$\varphi(m)$}. In that case, there is a unique intermediate field \mbox{$\FF$} of \mbox{$\QQ(\zeta_{m})/\QQ$} such that \mbox{$\FF/\QQ$} is a cyclic extension with degree \mbox{$n$} whenever \mbox{$n\mid\varphi(m)$}. Motivated by Lemma~\ref{spl} and \mbox{$\varphi(q^{k+1})=q^k(q-1)$}, we naturally consider \mbox{$m=q^{k+1}$} and \mbox{$n=q^k$} for each odd prime number \mbox{$q$} and positive integer \mbox{$k$}, shown as follows.
\begin{prop}
\label{power}
  Let \mbox{$q$} be an odd prime number, \mbox{$k$} be a positive integer, and \mbox{$\FF/\QQ$} be the unique \mbox{degree-$q^k$} sub-extension of \mbox{$\QQ(\zeta_{q^{k+1}})/\QQ$}. If \mbox{$\ell$} is a prime number such that \mbox{$\ell\neq q$} and \mbox{$q^2\nmid \ell^{q-1}-1$} then \mbox{$\ell$} is inert in \mbox{$\FF/\QQ$}.
\end{prop}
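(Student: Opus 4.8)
The plan is to reinterpret ``$\ell$ is inert in $\FF/\QQ$'' entirely in terms of the cyclic group $(\ZZ/q^{k+1}\ZZ)^*$, and then to verify the resulting congruence by a $q$-adic valuation computation that uses the hypothesis $q^2\nmid\ell^{q-1}-1$.

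First I would invoke the canonical isomorphism $\Gal(\QQ(\zeta_{q^{k+1}})/\QQ)\cong(\ZZ/q^{k+1}\ZZ)^*$, a cyclic group of order $\varphi(q^{k+1})=q^k(q-1)$, under which the Frobenius symbol of $\ell$ (unramified in $\QQ(\zeta_{q^{k+1}})/\QQ$, since $q$ is the only ramifying prime and $\ell\neq q$) is the residue class of $\ell$. Being the unique degree-$q^k$ sub-extension, $\FF$ is the fixed field of the unique subgroup $H\leq(\ZZ/q^{k+1}\ZZ)^*$ of order $q-1$, which is $H=\{g:g^{q-1}=1\}$; hence $\Gal(\FF/\QQ)\cong(\ZZ/q^{k+1}\ZZ)^*/H$ is cyclic of order $q^k$ and carries the class of $\ell$ modulo $H$ as its Frobenius. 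Because $\ell$ is unramified in $\FF/\QQ$, (\ref{sum}) gives $fg=q^k$ with the residue class degree $f$ equal to the order of that Frobenius, so $\ell$ is inert if and only if the class of $\ell$ generates the cyclic $q$-group $(\ZZ/q^{k+1}\ZZ)^*/H$. An element of a cyclic group of order $q^k$ generates it if and only if its $q^{k-1}$-th power is nontrivial, so $\ell$ is inert in $\FF/\QQ$ if and only if $\ell^{q^{k-1}}\notin H$, equivalently
$$\ell^{q^{k-1}(q-1)}\not\equiv1\pmod{q^{k+1}}.$$

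It remains to prove this non-congruence. By Fermat's little theorem $q\mid\ell^{q-1}-1$, while $q^2\nmid\ell^{q-1}-1$ forces the $q$-adic valuation $v_q(\ell^{q-1}-1)$ to equal exactly $1$; write $\ell^{q-1}=1+qu$ with $q\nmid u$. A short binomial expansion using that $q$ is odd shows $v_q((1+qu)^q-1)=2$, and iterating (this is the lifting-the-exponent lemma) gives $v_q(\ell^{q^{k-1}(q-1)}-1)=v_q(\ell^{q-1}-1)+v_q(q^{k-1})=1+(k-1)=k$. Since $k<k+1$, the displayed non-congruence holds, completing the proof.

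The only delicate point is the bookkeeping of the second paragraph---identifying the subgroup $H$ that fixes $\FF$, matching the residue class degree with the order of the Frobenius image, and reducing ``generates a cyclic $q$-group'' to a single $q^{k-1}$-th power condition. The valuation step itself is routine but genuinely requires $q$ odd (the analogue fails for $q=2$), which is precisely why the prime $2$, i.e.\ the case $n=2^k$, is treated by a separate construction elsewhere in this section.
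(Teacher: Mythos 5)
Your proposal is correct, and it reorganizes the field-theoretic reduction somewhat differently from the paper while sharing the same arithmetic core. The paper works directly with residue class degrees: citing the fact that for a prime $\wp$ of $\CO_{\QQ(\zeta_{q^{k+1}})}$ over $\ell$ the residue degree equals $\ord_{q^{k+1}}(\ell)$, it shows $q^k\mid\ord_{q^{k+1}}(\ell)$ by the same valuation computation you perform (an iterated binomial expansion in place of your appeal to the lifting-the-exponent lemma), and then descends to $\FF$ via the multiplicativity relations (\ref{tower}) and (\ref{sum}): the residue degree $f'$ of $\wp$ over $\FP=\wp\cap\CO_\FF$ divides $[\QQ(\zeta_{q^{k+1}}):\FF]=q-1$, which forces $q^k\mid f''$ and hence $f''=[\FF:\QQ]$. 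You instead descend at the level of Galois groups: identifying $\FF$ as the fixed field of the unique order-$(q-1)$ subgroup $H$ of $(\ZZ/q^{k+1}\ZZ)^*$ and using that the Frobenius of the unramified prime $\ell$ restricts to the class of $\ell$ modulo $H$, so that inertness in $\FF$ becomes the single criterion $\ell^{q^{k-1}(q-1)}\not\equiv 1\pmod{q^{k+1}}$ --- exactly the non-congruence the paper also establishes. Your route buys a clean ``if and only if'' reformulation of inertness in $\FF$ and avoids the $f',f''$ bookkeeping, at the cost of invoking the Frobenius formalism for cyclotomic fields rather than only the elementary order computation and the $efg$ machinery already set up in Section~\ref{prelim}, which is all the paper uses. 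Both arguments need $q$ odd at the same point (the valuation step), and your closing remark that this is why the $2$-power degrees require a separate construction matches the paper's discussion following Proposition~\ref{residue}.
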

\begin{proof}
  For a nonzero prime ideal \mbox{$\wp$} of \mbox{$\CO_{\QQ(\zeta_{q^{k+1}})}$} with \mbox{$\wp\mid\ell$}, the residue class degree \mbox{$f$} of \mbox{$\wp$} over \mbox{$\ell$} is the smallest positive integer \mbox{$r$} such that \mbox{$\ell^r\equiv 1\pmod{q^{k+1}}$}, denoted by \mbox{$\ord_{q^{k+1}}(\ell)$}~\cite[Ch. 13, Sec. 2, Thm. 2]{IR90}. Fermat's little theorem and \mbox{$\ell\ne q$} imply \mbox{$q\mid\ell^{q-1}-1$}; thus \mbox{$q^2\nmid \ell^{q-1}-1$} results in \mbox{$\ell^{q-1}=1+sq$} with \mbox{$q\nmid s$}. From \mbox{$\ell^{q(q-1)}=(1+sq)^q=1+sq^2+tq^3$}, \mbox{$q^2\mid\ell^{q(q-1)}-1$} and \mbox{$q^3\nmid \ell^{q(q-1)}-1$}. Continuing this argument leads to $$q^{k+1}\nmid \ell^{q^{k-1}(q-1)}-1,$$ i.e., \mbox{$\ord_{q^{k+1}}(\ell)\nmid q^{k-1}(q-1)$}. Moreover, Euler's theorem states $$\ord_{q^{k+1}}(\ell)\mid\varphi(q^{k+1})=q^k(q-1).$$ Therefore \mbox{$q^k\mid\ord_{q^{k+1}}(\ell)=f$}. Let \mbox{$\FP=\wp\cap\CO_\FF$}, and \mbox{$f'$} and \mbox{$f''$} be the residue class degrees of \mbox{$\wp$} over \mbox{$\FP$} and \mbox{$\FP$} over \mbox{$\ell$}, respectively. By~(\ref{tower}), \mbox{$q^k\mid f'f''$}, and by~(\ref{sum}), $$f'\mid [\QQ(\zeta_{q^{k+1}}):\FF]=q^k(q-1)/q^k=q-1.$$ Hence, \mbox{$q^k\mid f''$}. Again, by~(\ref{sum}) and \mbox{$[\FF:\QQ]=q^k$}, \mbox{$f''=[\FF:\QQ]$}; \mbox{$\ell$} is inert in \mbox{$\FF/\QQ$}.
\end{proof}
  Together Lemma~\ref{spl} with Proposition~\ref{power}, for each odd prime number \mbox{$q$} with \mbox{$q^2\nmid 2^{q-1}-1$} and positive integer \mbox{$k$}, a cyclic extension over \mbox{$\QQ(\bi)$} with degree \mbox{$q^k$} and \mbox{$1+\bi$} a non-norm element can always be constructed. Nonetheless, there do exist odd prime numbers \mbox{$q$} with \mbox{$q^2\mid 2^{q-1}-1$}, called the Wieferich primes. Although the only known Wieferich primes are \mbox{$q=1093$} and \mbox{$3511$}, Proposition~\ref{residue} below is capable of handling more general scenarios.
\begin{prop}
\label{residue}
  Let \mbox{$q$} be a prime number, \mbox{$k$} be a positive integer, \mbox{$p$} be an odd prime number with \mbox{$q^k\mid p-1$}, and \mbox{$\FF/\QQ(\bi)$} be the unique \mbox{degree-$q^k$} sub-extension of \mbox{$\QQ(\zeta_{p},\bi)/\QQ(\bi)$}. If \mbox{$x^q\equiv 2\pmod{p}$} is not solvable in \mbox{$\ZZ$}, then \mbox{$1+\bi$} is a non-norm element of \mbox{$\FF/\QQ(\bi)$}.
\end{prop}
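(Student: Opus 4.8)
The plan is to deduce this from Theorem~\ref{locnon}, applied with $\EE=\QQ(\bi)$, $\CO_\EE=\ZZ[\bi]$, the given cyclic degree-$q^k$ extension $\FF/\QQ(\bi)$, and $\gamma=1+\bi$. What I must supply is a nonzero prime ideal $\Fp$ of $\ZZ[\bi]$ above $p$ for which (a)~$\Fp$ is totally and tamely ramified in $\FF/\QQ(\bi)$, (b)~$1+\bi\in\ZZ[\bi]\setminus\Fp$, and (c)~$x^q\equiv 1+\bi\pmod{\Fp}$ is unsolvable in $\ZZ[\bi]$; then Theorem~\ref{locnon} finishes the proof.

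Step (a) is a direct extension of Example~\ref{ttr}. Note first that $q\ne p$, because $q^k\mid p-1$ forces $q\le p-1$. Every prime $\Fp$ of $\ZZ[\bi]$ over $p$ is totally ramified in $\QQ(\zeta_p,\bi)/\QQ(\bi)$: for $p\equiv 1\pmod 4$ this is Example~\ref{ttr} itself, and for $p\equiv 3\pmod 4$ (where $p$ is inert in $\QQ(\bi)$ and $\Fp=p\ZZ[\bi]$) the same tower argument via~(\ref{sum}) and~(\ref{tower}), applied to $\QQ\subset\QQ(\zeta_p)\subset\QQ(\zeta_p,\bi)$ and to $\QQ\subset\QQ(\bi)\subset\QQ(\zeta_p,\bi)$, gives ramification index $p-1=[\QQ(\zeta_p,\bi):\QQ(\bi)]$ for a prime of $\QQ(\zeta_p,\bi)$ over $\Fp$. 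Since $\FF/\QQ(\bi)$ lies inside $\QQ(\zeta_p,\bi)/\QQ(\bi)$, equations~(\ref{sum}) and~(\ref{tower}) force $\Fp$ to be totally ramified in $\FF/\QQ(\bi)$; and since the characteristic $p$ of $\ZZ[\bi]/\Fp$ does not divide $[\FF:\QQ(\bi)]=q^k$, it is totally and tamely ramified. So (a) holds for any $\Fp$ over $p$. Step (b) is immediate: $N_{\QQ(\bi)/\QQ}(1+\bi)=(1+\bi)(1-\bi)=2$ is prime to $p$, so $1+\bi\in\Fp$ would put $2\in\Fp\cap\ZZ=p\ZZ$.

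The substance of the argument is (c): I must choose $\Fp$ so that the nonzero residue of $1+\bi$ in the finite field $\ZZ[\bi]/\Fp$ is not a $q$-th power. Here the hypothesis "$x^q\equiv 2\pmod p$ is unsolvable in $\ZZ$" says exactly that $2$ is not a $q$-th power in $\ZZ/p\ZZ$ (the $q$-th powers being an index-$q$ subgroup of $(\ZZ/p\ZZ)^*$ since $q\mid p-1$), and I transport this across reduction using $N_{\QQ(\bi)/\QQ}(1+\bi)=2$ together with the action of complex conjugation on the primes above $p$. If $p\equiv 3\pmod 4$, there is a single prime $\Fp=p\ZZ[\bi]$; complex conjugation fixes $\Fp$ and induces the Frobenius $x\mapsto x^p$ on the residue field $\ZZ[\bi]/\Fp$ of order $p^2$, so the residue of $1-\bi$ is the $p$-th power of the residue of $1+\bi$, whence the norm of the residue of $1+\bi$ down to $\ZZ/p\ZZ$ equals the residue of $(1+\bi)(1-\bi)=2$; if the residue of $1+\bi$ were a $q$-th power, then so would be its norm $2$ in $\ZZ/p\ZZ$, a contradiction, so (c) holds for this $\Fp$. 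If $p\equiv 1\pmod 4$, there are two primes $\Fp_1,\Fp_2$ above $p$, both with residue field $\ZZ/p\ZZ$, interchanged by complex conjugation, so the two reduction maps send $1+\bi$ to a pair of elements of $\ZZ/p\ZZ$ whose product is the residue of $2$; were $1+\bi$ a $q$-th power modulo both, then $2$ would be a product of two $q$-th powers, hence a $q$-th power, a contradiction — so (c) holds for at least one of $\Fp_1,\Fp_2$.

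With a prime $\Fp$ meeting (a)–(c) in hand, Theorem~\ref{locnon} yields that $1+\bi$ is a non-norm element of $\FF/\QQ(\bi)$. I expect (c) to be the only real obstacle: (a) is a routine reprise of Example~\ref{ttr}'s ramification bookkeeping, (b) is a one-line norm computation, and the work in (c) is precisely in moving the condition on $2$ from $\ZZ/p\ZZ$ into $\ZZ[\bi]/\Fp$.
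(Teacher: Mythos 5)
Your proof is correct, but it takes a genuinely different route from the paper's. The paper never works with a prime of $\ZZ[\bi]$ at all: it sets $\KK$ to be the unique degree-$q^k$ sub-extension of $\QQ(\zeta_p)/\QQ$, applies Theorem~\ref{locnon} over $\QQ$ with the rational prime $p$ and $\gamma=2$ (Example~\ref{ttr} gives total tame ramification of $p$ in $\KK/\QQ$), and then, assuming $N_{\FF/\QQ(\bi)}(\alpha)=(1+\bi)^r$ with $1\le r<q^k$, uses transitivity of the norm together with $N_{\QQ(\bi)/\QQ}(1+\bi)=2$ to get $2^r\in N_{\KK/\QQ}(\KK^*)$, contradicting $2$ being a non-norm element of $\KK/\QQ$. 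You instead apply Theorem~\ref{locnon} directly to $\FF/\QQ(\bi)$ at a chosen prime $\Fp\mid p$, which obliges you to (i) extend the ramification bookkeeping of Example~\ref{ttr} to the case $p\equiv 3\pmod 4$ (the example only treats $p\equiv 1\pmod 4$), and (ii) transfer the $q$-th-power non-residue condition from $2$ modulo $p$ to $1+\bi$ modulo $\Fp$ via complex conjugation — through the residue-field norm $x\mapsto x^{p+1}$ when $p$ is inert, and through the product of the two conjugate reductions when $p$ splits, where indeed only one of $\Fp,\bar\Fp$ need work. Both steps are carried out correctly, so your argument is sound. The trade-off: the paper's descent to $\QQ$ is shorter and avoids any case analysis on $p\bmod 4$ or choice of $\Fp$, while your version is more explicit — it exhibits a concrete witness prime and is exactly the style of verification the paper itself uses later for the $2^k$ case (the discussion preceding Theorem~\ref{gen} and the check that $x^2\equiv 1+\bi\pmod{1+4\bi}$ is unsolvable in Example~\ref{real}), so it unifies Proposition~\ref{residue} with that construction rather than treating them separately.
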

\begin{proof}
  Let \mbox{$\KK/\QQ$} be the unique \mbox{degree-$q^k$} sub-extension of \mbox{$\QQ(\zeta_{p})/\QQ$}. From Example~\ref{ttr}, \mbox{$p$} is totally and tamely ramified in \mbox{$\KK/\QQ$}. Theorem~\ref{locnon} states that \mbox{$2$} is a non-norm element of \mbox{$\KK/\QQ$}. Suppose, on the contrary, that \mbox{$1+\bi$} is not a non-norm element of \mbox{$\FF/\QQ(\bi)$}, i.e., \mbox{$N_{\FF/\QQ(\bi)}(\alpha)=(1+\bi)^r$} for some \mbox{$\alpha\in\FF^*$} and \mbox{$1\le r<q^k$}. By the transitivity of norm, $$2^r=N_{\QQ(\bi)/\QQ}((1+\bi)^r)=N_{\QQ(\bi)/\QQ}(N_{\FF/\QQ(\bi)}(\alpha))=N_{\FF/\QQ}(\alpha)=N_{\KK/\QQ}(N_{\FF/\KK}(\alpha))\in N_{\KK/\QQ}(\KK^*),$$ a contradiction to \mbox{$2$} being a non-norm element of \mbox{$\KK/\QQ$}.
\end{proof}
  By Chebotarev's density theorem or~\cite[Thm. 4]{AR51}, there exist infinitely many odd prime numbers \mbox{$p$} such that \mbox{$q^k\mid p-1$} and \mbox{$x^q\equiv 2\pmod{p}$} is not solvable in \mbox{$\ZZ$} for each odd prime number \mbox{$q$} and positive integer \mbox{$k$}; thus a cyclic extension over \mbox{$\QQ(\bi)$} with degree \mbox{$q^k$} and \mbox{$1+\bi$} a non-norm element can always be constructed. Note that this statement is not true for \mbox{$q=2$} and \mbox{$k\ge 3$}; otherwise \mbox{$q^k\mid p-1$} implies \mbox{$8\mid p-1$} thereby that \mbox{$x^2\equiv 2\pmod{p}$} is always solvable in \mbox{$\ZZ$}. This is also the reason why there is no cyclic extension over \mbox{$\QQ$} with degree \mbox{$n$} such that \mbox{$2$} is inert whenever \mbox{$8\mid n$}. In other words, it is infeasible to find a cyclic extension over \mbox{$\QQ(\bi)$} with each of such degrees \mbox{$n$} and \mbox{$1+\bi$} a non-norm element by simply beginning at Lemma~\ref{spl}.

  Let \mbox{$n=2^k$} with \mbox{$k=1$} or \mbox{$2$}. Both \mbox{$x^2\equiv 2\pmod{3}$} and \mbox{$x^2\equiv 2\pmod{5}$} are not solvable in \mbox{$\ZZ$}. By Proposition~\ref{residue}, \mbox{$\QQ(\zeta_3,\bi)/\QQ(\bi)$} and \mbox{$\QQ(\zeta_5,\bi)/\QQ(\bi)$} are cyclic extensions with degrees \mbox{$2$} and \mbox{$4$}, respectively, and \mbox{$1+\bi$} a non-norm element. Integrated by Lemma~\ref{comp}, whenever \mbox{$8\nmid n$}, a cyclic extension over \mbox{$\QQ(\bi)$} with degree \mbox{$n$} and \mbox{$1+\bi$} a non-norm element can always be constructed.
\begin{example}[\mbox{$n=6$}]
  \mbox{$\QQ(\zeta_{9},\bi)/\QQ(\bi)$} is a cyclic extension with degree \mbox{$6$}. Since \mbox{$3^2\nmid 2^2-1$}, Lemma~\ref{spl} and Proposition~\ref{power} state that \mbox{$1+\bi$} is a non-norm element for the unique \mbox{degree-$3$} sub-extension \mbox{$\FF/\QQ(\bi)$} of \mbox{$\QQ(\zeta_{9},\bi)/\QQ(\bi)$}. Moreover, \mbox{$1+\bi$} is a non-norm element of \mbox{$\QQ(\zeta_3,\bi)/\QQ(\bi)$}. By Lemma~\ref{comp}, \mbox{$1+\bi$} is a non-norm element of the \mbox{degree-$6$} extension \mbox{$\FF(\zeta_{3})/\QQ(\bi)$}. In fact, \mbox{$\FF(\zeta_{3})=\QQ(\zeta_9,\bi)$} because both \mbox{$\FF$} and \mbox{$\QQ(\zeta_3,\bi)$} are subfields of \mbox{$\QQ(\zeta_9,\bi)$}.
\end{example}

  For \mbox{$n=2^k$} with \mbox{$k\ge 3$}, let \mbox{$p$} be an odd prime number with \mbox{$2^k\mid p-1$}, \mbox{$\Fp$} be a nonzero prime ideal of \mbox{$\ZZ[\bi]$} with \mbox{$\Fp\mid p$}, and \mbox{$\FF/\QQ(\bi)$} be the unique \mbox{degree-$2^k$} sub-extension of \mbox{$\QQ(\zeta_p,\bi)/\QQ(\bi)$}. Example~\ref{ttr} states that \mbox{$\Fp$} is totally and tamely ramified in \mbox{$\FF/\QQ(\bi)$}. From \mbox{$1+\bi\in\ZZ[\bi]\setminus\Fp$} and Theorem~\ref{locnon}, \mbox{$1+\bi$} is a non-norm element of \mbox{$\FF/\QQ(\bi)$} as long as \mbox{$x^2\equiv 1+\bi\pmod{\Fp}$} is not solvable in \mbox{$\ZZ[\bi]$}.

  By definitions, if \mbox{$1+\bi$} is a non-norm element of some cyclic extension \mbox{$\FF/\QQ(\bi)$} with degree \mbox{$2^r$}, then it is also a non-norm element of the unique \mbox{degree-$2^k$} sub-extension of \mbox{$\FF/\QQ(\bi)$} for every \mbox{$k\le r$}. The odd prime number \mbox{$1+2647\cdot 2^{1000}$} permits \mbox{$1+\bi$} to be a non-norm element of \mbox{$\QQ(\zeta_{1+2647\cdot 2^{1000}},\bi)/\QQ(\bi)$}. Indeed, \mbox{$1+\bi$} is always a generator of \mbox{$(\ZZ[\bi]/\Fp)^*$} whenever \mbox{$\Fp$} is a nonzero prime ideal of \mbox{$\ZZ[\bi]$} with \mbox{$\Fp\mid 1+2647\cdot 2^{1000}$}. Consequently, \mbox{$1+\bi$} is a non-norm element of the unique \mbox{degree-$2^k$} sub-extension of \mbox{$\QQ(\zeta_{1+2647\cdot 2^{1000}},\bi)/\QQ(\bi)$} for each \mbox{$k\le 1000$}. Generally speaking, Chebotarev's density theorem shows that for every \mbox{$k\ge 3$}, there exist infinitely many odd prime numbers \mbox{$p$} such that
\begin{enumerate}
\item \mbox{$p\equiv 1\pmod{2^k}$};
\item \mbox{$x^2\equiv 1+\bi\pmod{\Fp}$} is not solvable in
\mbox{$\ZZ[\bi]$} for a nonzero prime ideal \mbox{$\Fp$} of \mbox{$\ZZ[\bi]$} with \mbox{$\Fp\mid p$}.
\end{enumerate}
  Hence, a cyclic extension over \mbox{$\QQ(\bi)$} with degree \mbox{$2^k$} and \mbox{$1+\bi$} a non-norm element can always be constructed for each positive integer \mbox{$k$}.
\begin{thm}
\label{gen}
  There exists a cyclic extension over \mbox{$\QQ(\bi)$} with degree \mbox{$n$} and \mbox{$1+\bi$} a non-norm element for each positive integer \mbox{$n$}.
\end{thm}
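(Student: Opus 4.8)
The plan is to reduce to prime-power degrees with Lemma~\ref{comp} and then assemble the constructions already obtained in the discussion above. Write $n=q_1^{k_1}\cdots q_r^{k_r}$ with $q_1,\dots,q_r$ distinct primes and each $k_i\ge 1$; the case $n=1$ is trivial, since the order modulo the norm group of the trivial extension $\QQ(\bi)/\QQ(\bi)$ is always $1$. For each $i$ I would produce a cyclic extension $\FF_i/\QQ(\bi)$ of degree $q_i^{k_i}$ having $1+\bi$ as a non-norm element, using the three cases treated above. If $q_i$ is an odd non-Wieferich prime, set $\FF_i=\FF(\bi)$ where $\FF/\QQ$ is the degree-$q_i^{k_i}$ subextension of $\QQ(\zeta_{q_i^{k_i+1}})/\QQ$: the prime $2$ is inert in $\FF/\QQ$ by Proposition~\ref{power} (legitimately, as $q_i^2\nmid 2^{q_i-1}-1$), and $1+\bi$ lies in $\Fp\setminus\Fp^2$ for the prime $\Fp=(1+\bi)\ZZ[\bi]$ over $2$, so Lemma~\ref{spl} applies. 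If $q_i$ is an odd Wieferich prime, use Chebotarev's density theorem (or \cite[Thm.~4]{AR51}) to pick an odd prime $p$ with $q_i^{k_i}\mid p-1$ for which $x^{q_i}\equiv 2\pmod p$ has no solution in $\ZZ$, and let $\FF_i$ be the degree-$q_i^{k_i}$ subextension of $\QQ(\zeta_p,\bi)/\QQ(\bi)$, so that Proposition~\ref{residue} applies. If $q_i=2$, take $\FF_i=\QQ(\zeta_3,\bi)$ when $k_i=1$ and $\FF_i=\QQ(\zeta_5,\bi)$ when $k_i=2$ (Proposition~\ref{residue}), and when $k_i\ge 3$ take $\FF_i$ to be the degree-$2^{k_i}$ subextension of $\QQ(\zeta_p,\bi)/\QQ(\bi)$ for an odd prime $p\equiv 1\pmod{2^{k_i}}$ such that $x^2\equiv 1+\bi\pmod{\Fp}$ is insoluble in $\ZZ[\bi]$ for a prime $\Fp\mid p$ (such $p$ exist by Chebotarev), applying Theorem~\ref{locnon} after observing, via Example~\ref{ttr}, that $\Fp$ is totally and tamely ramified in $\FF_i/\QQ(\bi)$.

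It remains to glue the $\FF_i$. Since $q_1^{k_1},\dots,q_r^{k_r}$ are pairwise coprime, I would argue by induction: put $\LL_1=\FF_1$ and $\LL_i=\LL_{i-1}\FF_i$ for $i\ge 2$; then Lemma~\ref{comp} applied to $\LL_{i-1}/\QQ(\bi)$ and $\FF_i/\QQ(\bi)$ — whose degrees $q_1^{k_1}\cdots q_{i-1}^{k_{i-1}}$ and $q_i^{k_i}$ are coprime — shows that $\LL_i/\QQ(\bi)$ is a cyclic extension of degree $q_1^{k_1}\cdots q_i^{k_i}$ with $1+\bi$ a non-norm element. Taking $i=r$ yields the desired extension of degree $n$.

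Because every ingredient has been established in Propositions~\ref{power} and~\ref{residue}, Theorem~\ref{locnon}, and the density arguments preceding the statement, the proof is essentially bookkeeping and there is no substantial obstacle left. The one point I would double-check is merely that the coprimality hypothesis of Lemma~\ref{comp} is indeed available at each step of the induction, which holds because $\gcd(q_1^{k_1}\cdots q_{i-1}^{k_{i-1}},\,q_i^{k_i})=1$ for distinct primes. I would also remark, for the $q=2$, $k\ge 3$ case, that the detour through the insolubility of $x^2\equiv 1+\bi\pmod{\Fp}$ over $\QQ(\bi)$ cannot be avoided: there is no cyclic extension of $\QQ$ of degree divisible by $8$ in which $2$ is inert, so Lemma~\ref{spl} is simply unavailable there, which is exactly why this case requires its own argument and why Theorem~\ref{locnon} was needed.
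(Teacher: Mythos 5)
Your proposal is correct and follows essentially the same route as the paper: reduce to prime-power degrees via Lemma~\ref{comp}, handle odd prime powers through Lemma~\ref{spl} with Proposition~\ref{power} (non-Wieferich case) or Proposition~\ref{residue} with Chebotarev/\cite{AR51} (general case), handle $2^k$ with $k\le 2$ via Proposition~\ref{residue} and $k\ge 3$ via Theorem~\ref{locnon}, Example~\ref{ttr}, and Chebotarev, then glue. No gaps worth noting.
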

  Recall~\cite[Lem. 3.1]{LC09} that for an odd prime number \mbox{$p$} with \mbox{$n\mid p-1$}, the unique \mbox{degree-$n$} sub-extension \mbox{$\FF/\QQ(\bi)$} of \mbox{$\QQ(\zeta_p,\bi)/\QQ(\bi)$} can be obtained through
\begin{equation}
\label{sub}
  \FF= \QQ(\eta,\bi)\mbox{ with }\eta=\sum_{i=0}^{\frac{p-1}{n}-1}\zeta_p^{c^{ni}},
\end{equation}
  where \mbox{$c$} is a primitive root modulo \mbox{$p$}. Example~\ref{real} demonstrates the generalized constructional procedure by instancing \mbox{$n=8$} and \mbox{$16$}.
\begin{example}[\mbox{$n=8$} and \mbox{$16$}]
\label{real}
  \mbox{$\QQ(\zeta_{17},\bi)/\QQ(\bi)$} is a cyclic extension with degree \mbox{$16$}. According to~(\ref{sub}), \mbox{$\eta=\zeta_{17}+\zeta_{17}^{-1}$} constructs \mbox{$\QQ(\eta,\bi)/\QQ(\bi)$} to be the unique \mbox{degree-$8$} sub-extension of \mbox{$\QQ(\zeta_{17},\bi)/\QQ(\bi)$}. From \mbox{$17\ZZ[\bi]=(1+4\bi)\ZZ[\bi](1-4\bi)\ZZ[\bi]$}, the residue class degree of \mbox{$1+4\bi$} over \mbox{$17$} is \mbox{$1$}; in other words, \mbox{$\ZZ[\bi]/(1+4\bi)\simeq\ZZ/17\ZZ$}. From \mbox{$\bi\equiv 4\pmod{1+4\bi}$}, to solve \mbox{$x^2\equiv 1+\bi\pmod{1+4\bi}$} in \mbox{$\ZZ[\bi]$} is equivalent to solve \mbox{$x^2\equiv 1+4\pmod{17}$} in \mbox{$\ZZ$}. Whereas \mbox{$x^2\equiv 5\pmod{17}$} is not solvable in \mbox{$\ZZ$}, \mbox{$1+\bi$} is a non-norm element of both the \mbox{degree-$8$} extension \mbox{$\QQ(\eta,\bi)/\QQ(\bi)$} and \mbox{degree-$16$} extension \mbox{$\QQ(\zeta_{17},\bi)/\QQ(\bi)$}.
\end{example}
  Table~\ref{table1} illustrates cyclic extensions over \mbox{$\QQ(\bi)$} with degrees ranging from \mbox{$2$} to \mbox{$100$} and \mbox{$1+\bi$} a non-norm element, by jointing those designated primitive roots of unity to \mbox{$\QQ(\bi)$}. Note that none of \mbox{$\{\pm 1,\pm\bi\}$} can be a non-norm element when the extension degree over \mbox{$\QQ(\bi)$}, i.e., the number of transmit antennas, \mbox{$n\ge 5$}; thus Theorem~\ref{gen} also proves that \mbox{$1+\bi$} is an integer non-norm element with the smallest absolute value over QAM for every \mbox{$n\geq 5$}.
\begin{rem}
  An analogous series of arguments can show how to construct a cyclic extension over \mbox{$\QQ(\zeta_3)$} with degree \mbox{$n$} and \mbox{$\sqrt{-3}$} a non-norm element for every positive integer $n$. Table~\ref{table2} illustrates cyclic extensions over \mbox{$\QQ(\zeta_3)$} with degrees ranging from \mbox{$2$} to \mbox{$100$} and \mbox{$\sqrt{-3}$} a non-norm element, by jointing those designated primitive roots of unity to \mbox{$\QQ(\zeta_3)$}. Note that none of \mbox{$\{\pm 1,\pm\zeta_3,\pm\zeta_3^2\}$} can be a non-norm element when \mbox{$n\ge 7$}. Similarly, \mbox{$\sqrt{-3}$} is an integer non-norm element with the smallest absolute value over hexagonal modulations (HEX) for every \mbox{$n\geq 7$}.
\end{rem}
\begin{table*}
\centering
\begin{tabular}{|c|c||c|c||c|c||c|c|}
\hline
  Extension & Roots & Extension & Roots & Extension & Roots & Extension & Roots \\
  Degrees & of Unity & Degrees & of Unity & Degrees & of Unity & Degrees & of Unity \\
\hline
  &  & \mbox{$26$} & \mbox{$\zeta_{53}$} & \mbox{$51$} & \mbox{$\zeta_{103}$} & \mbox{$76$} & \mbox{$\zeta_{229}$} \\
  \mbox{$2$} & \mbox{$\zeta_3$} & \mbox{$27$} & \mbox{$\zeta_{81}$} & \mbox{$52$} & \mbox{$\zeta_{53}$} & \mbox{$77$} & \mbox{$\zeta_{463}$} \\
  \mbox{$3$} & \mbox{$\zeta_7$} & \mbox{$28$} & \mbox{$\zeta_{29}$} & \mbox{$53$} & \mbox{$\zeta_{107}$} & \mbox{$78$} & \mbox{$\zeta_{169}$} \\
  \mbox{$4$} & \mbox{$\zeta_5$} & \mbox{$29$} & \mbox{$\zeta_{59}$} & \mbox{$54$} & \mbox{$\zeta_{81}$} & \mbox{$79$} & \mbox{$\zeta_{317}$} \\
  \mbox{$5$} & \mbox{$\zeta_{11}$} & \mbox{$30$} & \mbox{$\zeta_{61}$} & \mbox{$55$} & \mbox{$\zeta_{121}$} & \mbox{$80$} & \mbox{$\zeta_{187}$} \\
  \mbox{$6$} & \mbox{$\zeta_{9}$} & \mbox{$31$} & \mbox{$\zeta_{311}$} & \mbox{$56$} & \mbox{$\zeta_{493}$} & \mbox{$81$} & \mbox{$\zeta_{163}$} \\
  \mbox{$7$} & \mbox{$\zeta_{29}$} & \mbox{$32$} & \mbox{$\zeta_{97}$} & \mbox{$57$} & \mbox{$\zeta_{361}$} & \mbox{$82$} & \mbox{$\zeta_{83}$} \\
  \mbox{$8$} & \mbox{$\zeta_{17}$} & \mbox{$33$} &$\zeta_{67}$  & \mbox{$58$} & \mbox{$\zeta_{59}$} & \mbox{$83$} & \mbox{$\zeta_{167}$} \\
  \mbox{$9$} & \mbox{$\zeta_{19}$} & \mbox{$34$} & \mbox{$\zeta_{307}$} & \mbox{$59$} & \mbox{$\zeta_{709}$} & \mbox{$84$} & \mbox{$\zeta_{203}$} \\
  \mbox{$10$} & \mbox{$\zeta_{11}$} & \mbox{$35$} & \mbox{$\zeta_{71}$} & \mbox{$60$} & \mbox{$\zeta_{61}$} & \mbox{$85$} & \mbox{$\zeta_{1021}$} \\
  \mbox{$11$} & \mbox{$\zeta_{23}$} & \mbox{$36$} & \mbox{$\zeta_{37}$} & \mbox{$61$} & \mbox{$\zeta_{367}$} & \mbox{$86$} & \mbox{$\zeta_{173}$} \\
  \mbox{$12$} & \mbox{$\zeta_{13}$} & \mbox{$37$} & \mbox{$\zeta_{149}$} & \mbox{$62$} & \mbox{$\zeta_{373}$} & \mbox{$87$} & \mbox{$\zeta_{349}$} \\
  \mbox{$13$} & \mbox{$\zeta_{53}$} & \mbox{$38$} & \mbox{$\zeta_{361}$} & \mbox{$63$} & \mbox{$\zeta_{379}$} & \mbox{$88$} & \mbox{$\zeta_{391}$} \\
  \mbox{$14$} & \mbox{$\zeta_{29}$} & \mbox{$39$} & \mbox{$\zeta_{79}$} & \mbox{$64$} & \mbox{$\zeta_{193}$} & \mbox{$89$} & \mbox{$\zeta_{179}$} \\
  \mbox{$15$} & \mbox{$\zeta_{61}$} & \mbox{$40$} & \mbox{$\zeta_{187}$} & \mbox{$65$} & \mbox{$\zeta_{131}$} & \mbox{$90$} & \mbox{$\zeta_{181}$} \\
  \mbox{$16$} & \mbox{$\zeta_{17}$} & \mbox{$41$} & \mbox{$\zeta_{83}$} & \mbox{$66$} & \mbox{$\zeta_{67}$} & \mbox{$91$} & \mbox{$\zeta_{547}$} \\
  \mbox{$17$} & \mbox{$\zeta_{103}$} & \mbox{$42$} & \mbox{$\zeta_{147}$} & \mbox{$67$} & \mbox{$\zeta_{269}$} & \mbox{$92$} & \mbox{$\zeta_{235}$} \\
  \mbox{$18$} & \mbox{$\zeta_{19}$} & \mbox{$43$} & \mbox{$\zeta_{173}$} & \mbox{$68$} & \mbox{$\zeta_{515}$} & \mbox{$93$} & \mbox{$\zeta_{373}$} \\
  \mbox{$19$} & \mbox{$\zeta_{191}$} & \mbox{$44$} & \mbox{$\zeta_{115}$} & \mbox{$69$} & \mbox{$\zeta_{139}$} & \mbox{$94$} & \mbox{$\zeta_{283}$} \\
  \mbox{$20$} & \mbox{$\zeta_{25}$} & \mbox{$45$} & \mbox{$\zeta_{181}$} & \mbox{$70$} & \mbox{$\zeta_{211}$} & \mbox{$95$} & \mbox{$\zeta_{191}$} \\
  \mbox{$21$} & \mbox{$\zeta_{49}$} & \mbox{$46$} & \mbox{$\zeta_{139}$} & \mbox{$71$} & \mbox{$\zeta_{569}$} & \mbox{$96$} & \mbox{$\zeta_{679}$} \\
  \mbox{$22$} & \mbox{$\zeta_{67}$} & \mbox{$47$} & \mbox{$\zeta_{283}$} & \mbox{$72$} & \mbox{$\zeta_{323}$} & \mbox{$97$} & \mbox{$\zeta_{389}$} \\
  \mbox{$23$} & \mbox{$\zeta_{47}$} & \mbox{$48$} & \mbox{$\zeta_{119}$} & \mbox{$73$} & \mbox{$\zeta_{293}$} & \mbox{$98$} & \mbox{$\zeta_{197}$} \\
  \mbox{$24$} & \mbox{$\zeta_{119}$} & \mbox{$49$} & \mbox{$\zeta_{197}$} & \mbox{$74$} & \mbox{$\zeta_{149}$} & \mbox{$99$} & \mbox{$\zeta_{199}$} \\
  \mbox{$25$} & \mbox{$\zeta_{101}$} & \mbox{$50$} & \mbox{$\zeta_{101}$} & \mbox{$75$} & \mbox{$\zeta_{707}$} & \mbox{$100$} & \mbox{$\zeta_{101}$} \\
\hline
\end{tabular}
\caption{Primitive roots of unity jointed to \mbox{$\QQ(\bi)$}.}\label{table1}
\end{table*}
\begin{table*}
\centering
\begin{tabular}{|c|c||c|c||c|c||c|c|}
\hline
  Extension & Roots & Extension & Roots & Extension & Roots & Extension & Roots \\
  Degrees & of Unity & Degrees & of Unity & Degrees & of Unity & Degrees & of Unity \\
\hline
  &  & \mbox{$26$} & \mbox{$\zeta_{53}$} & \mbox{$51$} & \mbox{$\zeta_{409}$} & \mbox{$76$} & \mbox{$\zeta_{761}$} \\
  \mbox{$2$} & \mbox{$\zeta_5$} & \mbox{$27$} & \mbox{$\zeta_{109}$} & \mbox{$52$} & \mbox{$\zeta_{53}$} & \mbox{$77$} & \mbox{$\zeta_{463}$} \\
  \mbox{$3$} & \mbox{$\zeta_7$} & \mbox{$28$} & \mbox{$\zeta_{29}$} & \mbox{$53$} & \mbox{$\zeta_{107}$} & \mbox{$78$} & \mbox{$\zeta_{79}$} \\
  \mbox{$4$} & \mbox{$\zeta_5$} & \mbox{$29$} & \mbox{$\zeta_{59}$} & \mbox{$54$} & \mbox{$\zeta_{163}$} & \mbox{$79$} & \mbox{$\zeta_{317}$} \\
  \mbox{$5$} & \mbox{$\zeta_{11}$} & \mbox{$30$} & \mbox{$\zeta_{31}$} & \mbox{$55$} & \mbox{$\zeta_{253}$} & \mbox{$80$} & \mbox{$\zeta_{187}$} \\
  \mbox{$6$} & \mbox{$\zeta_{7}$} & \mbox{$31$} & \mbox{$\zeta_{311}$} & \mbox{$56$} & \mbox{$\zeta_{113}$} & \mbox{$81$} & \mbox{$\zeta_{163}$} \\
  \mbox{$7$} & \mbox{$\zeta_{29}$} & \mbox{$32$} & \mbox{$\zeta_{128}$} & \mbox{$57$} & \mbox{$\zeta_{361}$} & \mbox{$82$} & \mbox{$\zeta_{415}$} \\
  \mbox{$8$} & \mbox{$\zeta_{17}$} & \mbox{$33$} &$\zeta_{161}$  & \mbox{$58$} & \mbox{$\zeta_{233}$} & \mbox{$83$} & \mbox{$\zeta_{167}$} \\
  \mbox{$9$} & \mbox{$\zeta_{19}$} & \mbox{$34$} & \mbox{$\zeta_{103}$} & \mbox{$59$} & \mbox{$\zeta_{709}$} & \mbox{$84$} & \mbox{$\zeta_{203}$} \\
  \mbox{$10$} & \mbox{$\zeta_{25}$} & \mbox{$35$} & \mbox{$\zeta_{71}$} & \mbox{$60$} & \mbox{$\zeta_{155}$} & \mbox{$85$} & \mbox{$\zeta_{1133}$} \\
  \mbox{$11$} & \mbox{$\zeta_{23}$} & \mbox{$36$} & \mbox{$\zeta_{95}$} & \mbox{$61$} & \mbox{$\zeta_{367}$} & \mbox{$86$} & \mbox{$\zeta_{173}$} \\
  \mbox{$12$} & \mbox{$\zeta_{35}$} & \mbox{$37$} & \mbox{$\zeta_{149}$} & \mbox{$62$} & \mbox{$\zeta_{961}$} & \mbox{$87$} & \mbox{$\zeta_{349}$} \\
  \mbox{$13$} & \mbox{$\zeta_{53}$} & \mbox{$38$} & \mbox{$\zeta_{361}$} & \mbox{$63$} & \mbox{$\zeta_{127}$} & \mbox{$88$} & \mbox{$\zeta_{89}$} \\
  \mbox{$14$} & \mbox{$\zeta_{29}$} & \mbox{$39$} & \mbox{$\zeta_{79}$} & \mbox{$64$} & \mbox{$\zeta_{256}$} & \mbox{$89$} & \mbox{$\zeta_{179}$} \\
  \mbox{$15$} & \mbox{$\zeta_{31}$} & \mbox{$40$} & \mbox{$\zeta_{187}$} & \mbox{$65$} & \mbox{$\zeta_{131}$} & \mbox{$90$} & \mbox{$\zeta_{209}$} \\
  \mbox{$16$} & \mbox{$\zeta_{17}$} & \mbox{$41$} & \mbox{$\zeta_{83}$} & \mbox{$66$} & \mbox{$\zeta_{161}$} & \mbox{$91$} & \mbox{$\zeta_{911}$} \\
  \mbox{$17$} & \mbox{$\zeta_{103}$} & \mbox{$42$} & \mbox{$\zeta_{43}$} & \mbox{$67$} & \mbox{$\zeta_{269}$} & \mbox{$92$} & \mbox{$\zeta_{235}$} \\
  \mbox{$18$} & \mbox{$\zeta_{19}$} & \mbox{$43$} & \mbox{$\zeta_{173}$} & \mbox{$68$} & \mbox{$\zeta_{137}$} & \mbox{$93$} & \mbox{$\zeta_{373}$} \\
  \mbox{$19$} & \mbox{$\zeta_{191}$} & \mbox{$44$} & \mbox{$\zeta_{89}$} & \mbox{$69$} & \mbox{$\zeta_{139}$} & \mbox{$94$} & \mbox{$\zeta_{283}$} \\
  \mbox{$20$} & \mbox{$\zeta_{25}$} & \mbox{$45$} & \mbox{$\zeta_{181}$} & \mbox{$70$} & \mbox{$\zeta_{211}$} & \mbox{$95$} & \mbox{$\zeta_{191}$} \\
  \mbox{$21$} & \mbox{$\zeta_{43}$} & \mbox{$46$} & \mbox{$\zeta_{139}$} & \mbox{$71$} & \mbox{$\zeta_{569}$} & \mbox{$96$} & \mbox{$\zeta_{896}$} \\
  \mbox{$22$} & \mbox{$\zeta_{67}$} & \mbox{$47$} & \mbox{$\zeta_{283}$} & \mbox{$72$} & \mbox{$\zeta_{323}$} & \mbox{$97$} & \mbox{$\zeta_{389}$} \\
  \mbox{$23$} & \mbox{$\zeta_{47}$} & \mbox{$48$} & \mbox{$\zeta_{119}$} & \mbox{$73$} & \mbox{$\zeta_{293}$} & \mbox{$98$} & \mbox{$\zeta_{197}$} \\
  \mbox{$24$} & \mbox{$\zeta_{119}$} & \mbox{$49$} & \mbox{$\zeta_{197}$} & \mbox{$74$} & \mbox{$\zeta_{149}$} & \mbox{$99$} & \mbox{$\zeta_{199}$} \\
  \mbox{$25$} & \mbox{$\zeta_{101}$} & \mbox{$50$} & \mbox{$\zeta_{101}$} & \mbox{$75$} & \mbox{$\zeta_{601}$} & \mbox{$100$} & \mbox{$\zeta_{101}$} \\
\hline
\end{tabular}
\caption{Primitive roots of unity jointed to \mbox{$\QQ(\zeta_3)$}.}\label{table2}
\end{table*}
\begin{table*}
\centering
\begin{tabular}{|c||c|c|c|}\hline
  \multicolumn{4}{|c|}{\mbox{$n=8$}}\\
  \hline Coding Schemes & New Code & \cite{LC09} & \cite{GX09}\\
  \hline &&& \\ \mbox{$\xi(\mathbf{S})$} & \mbox{$\frac{1}{278130^8}$} & \mbox{$\frac{1}{414430^8}$} & \mbox{$\frac{1}{888380^8}$}\\ &&&\\
  \hline \mbox{$\gamma$} & \mbox{$1+\bi$} & \multicolumn{2}{c|}{\mbox{$2+\bi$}}\\
  \hline \mbox{$\eta$} & \multicolumn{2}{c|}{\mbox{$\zeta_{17}+\zeta_{17}^{16}$}} & \mbox{$\zeta_{32}+\zeta_{32}^{15}$}\\
  \mbox{$\sigma(\eta)$} & \multicolumn{2}{c|}{\mbox{$\zeta_{17}^3+\zeta_{17}^{14}$}} & \mbox{$\zeta_{32}^5+\zeta_{32}^{11}$}\\
  \mbox{$\sigma^2(\eta)$} & \multicolumn{2}{c|}{\mbox{$\zeta_{17}^8+\zeta_{17}^{9}$}} & \mbox{$\zeta_{32}^{25}+\zeta_{32}^{23}$}\\
  \mbox{$\sigma^3(\eta)$} & \multicolumn{2}{c|}{\mbox{$\zeta_{17}^7+\zeta_{17}^{10}$}} & \mbox{$\zeta_{32}^{29}+\zeta_{32}^{19}$}\\
  \mbox{$\sigma^4(\eta)$} & \multicolumn{2}{c|}{\mbox{$\zeta_{17}^4+\zeta_{17}^{13}$}} & \mbox{$\zeta_{32}^{17}+\zeta_{32}^{31}$}\\
  \mbox{$\sigma^5(\eta)$} & \multicolumn{2}{c|}{\mbox{$\zeta_{17}^5+\zeta_{17}^{12}$}} & \mbox{$\zeta_{32}^{21}+\zeta_{32}^{27}$}\\
  \mbox{$\sigma^6(\eta)$} & \multicolumn{2}{c|}{\mbox{$\zeta_{17}^2+\zeta_{17}^{15}$}} & \mbox{$\zeta_{32}^{9}+\zeta_{32}^{7}$}\\
  \mbox{$\sigma^7(\eta)$} & \multicolumn{2}{c|}{\mbox{$\zeta_{17}^6+\zeta_{17}^{11}$}} & \mbox{$\zeta_{32}^{13}+\zeta_{32}^{3}$}\\
  \hline \hline \multicolumn{4}{|c|}{\mbox{$n=16$}}\\
  \hline Coding Schemes & New Code & \cite{LC09} & \cite{GX09}\\
  \hline &&& \\ \mbox{$\xi(\mathbf{S})$} & \mbox{$\frac{1}{6016^{16}}$} & \mbox{$\frac{1}{11776^{16}}$} & \mbox{$\frac{1}{\left(1.7051\cdot 10^{11}\right)^{16}}$}\\ &&&\\
  \hline \mbox{$\gamma$} & \mbox{$1+\bi$} & \multicolumn{2}{c|}{\mbox{$2+\bi$}}\\
  \hline \mbox{$\eta$} & \multicolumn{2}{c|}{\mbox{$\zeta_{17}$}} & \mbox{$\zeta_{64}+\zeta_{64}^{31}$}\\
  \mbox{$\sigma(\eta)$} & \multicolumn{2}{c|}{\mbox{$\zeta_{17}^3$}} & \mbox{$\zeta_{64}^5+\zeta_{64}^{27}$}\\
  \mbox{$\sigma^2(\eta)$} & \multicolumn{2}{c|}{\mbox{$\zeta_{17}^9$}} & \mbox{$\zeta_{64}^{25}+\zeta_{64}^{7}$}\\
  \mbox{$\sigma^3(\eta)$} & \multicolumn{2}{c|}{\mbox{$\zeta_{17}^{10}$}} & \mbox{$\zeta_{64}^{61}+\zeta_{64}^{35}$}\\
  \mbox{$\sigma^4(\eta)$} & \multicolumn{2}{c|}{\mbox{$\zeta_{17}^{13}$}} & \mbox{$\zeta_{64}^{49}+\zeta_{64}^{47}$}\\
  \mbox{$\sigma^5(\eta)$} & \multicolumn{2}{c|}{\mbox{$\zeta_{17}^5$}} & \mbox{$\zeta_{64}^{53}+\zeta_{64}^{43}$}\\
  \mbox{$\sigma^6(\eta)$} & \multicolumn{2}{c|}{\mbox{$\zeta_{17}^{15}$}} & \mbox{$\zeta_{64}^{9}+\zeta_{64}^{23}$}\\
  \mbox{$\sigma^7(\eta)$} & \multicolumn{2}{c|}{\mbox{$\zeta_{17}^{11}$}} & \mbox{$\zeta_{64}^{45}+\zeta_{64}^{51}$}\\
  \mbox{$\sigma^8(\eta)$} & \multicolumn{2}{c|}{\mbox{$\zeta_{17}^{16}$}} & \mbox{$\zeta_{64}^{33}+\zeta_{64}^{63}$}\\
  \mbox{$\sigma^9(\eta)$} & \multicolumn{2}{c|}{\mbox{$\zeta_{17}^{14}$}} & \mbox{$\zeta_{64}^{37}+\zeta_{64}^{59}$}\\
  \mbox{$\sigma^{10}(\eta)$} & \multicolumn{2}{c|}{\mbox{$\zeta_{17}^{8}$}} & \mbox{$\zeta_{64}^{57}+\zeta_{64}^{39}$}\\
  \mbox{$\sigma^{11}(\eta)$} & \multicolumn{2}{c|}{\mbox{$\zeta_{17}^{7}$}} & \mbox{$\zeta_{64}^{29}+\zeta_{64}^{3}$}\\
  \mbox{$\sigma^{12}(\eta)$} & \multicolumn{2}{c|}{\mbox{$\zeta_{17}^{4}$}} & \mbox{$\zeta_{64}^{17}+\zeta_{64}^{15}$}\\
  \mbox{$\sigma^{13}(\eta)$} & \multicolumn{2}{c|}{\mbox{$\zeta_{17}^{12}$}} & \mbox{$\zeta_{64}^{21}+\zeta_{64}^{11}$}\\
  \mbox{$\sigma^{14}(\eta)$} & \multicolumn{2}{c|}{\mbox{$\zeta_{17}^{2}$}} & \mbox{$\zeta_{64}^{41}+\zeta_{64}^{55}$}\\
  \mbox{$\sigma^{15}(\eta)$} & \multicolumn{2}{c|}{\mbox{$\zeta_{17}^{6}$}} & \mbox{$\zeta_{64}^{13}+\zeta_{64}^{19}$}\\
\hline
\end{tabular}
\caption{The normalized diversity products and corresponding coding structures.}\label{table3}
\end{table*}
\section{Comparisons With Existing Codes}
\label{numerical}
  Given a cyclic extension \mbox{$\QQ(\eta,\bi)/\QQ(\bi)$} with degree \mbox{$n$}, a generator \mbox{$\sigma$} of \mbox{$\Gal(\QQ(\eta,\bi)/\QQ(\bi))$}, and a non-norm element \mbox{$\gamma$} of \mbox{$\QQ(\eta,\bi)/\QQ(\bi)$}, the generating full-rate space-time block code with nonvanishing determinants can be expressed as~\cite{EKPKL06,GX09,KR05,LC09} $$\mathbf{S}=\left(\begin{array}{ccccc}s_1 & \gamma\sigma(s_n) & \gamma\sigma^2(s_{n-1}) & \cdots & \gamma\sigma^{n-1}(s_2)\\ s_2 & \sigma(s_1) & \gamma\sigma^2(s_n) & \cdots & \gamma\sigma^{n-1}(s_3)\\ s_3 & \sigma(s_2) & \sigma^2(s_1) & \cdots & \gamma\sigma^{n-1}(s_4)\\ \vdots & \vdots & \vdots & \ddots & \vdots\\ s_n & \sigma(s_{n-1}) & \sigma^2(s_{n-2}) & \cdots & \sigma^{n-1}(s_1)\end{array}\right),$$ where $$s_i= \sum_{j=1}^n x_{i,j}\eta^{j-1}$$ for each \mbox{$i=1,\ldots,n$} with \mbox{$(x_{i,j})_{1\le i,j\le n}\in \ZZ(\bi)^{n\times n}$} representing the \mbox{$n^2$} input symbols. When \mbox{$\gamma$} is an integer non-norm element, the minimum determinant of \mbox{$\mathbf{S}$}, $$\delta(\mathbf{S})= \min_{(x_{i,j})_{1\le i,j\le n}\in \ZZ(\bi)^{n\times n}\setminus \mathbf{0}_{n\times n}}|\det(\mathbf{S})|^2,$$ always equals \mbox{$1$}, where \mbox{$\mathbf{0}_{k\times m}$} denotes the \mbox{$k\times m$} matrix all of whose entries are zero. An asymptotic measure of the performance for \mbox{$\mathbf{S}$} over QAM can be determined via the normalized diversity product~\cite{GX09}: $$\xi(\mathbf{S})= E^{-n}=\left(\sum_{i=0}^{n-1} \left\|\left(\begin{array}{cc}\mathbf{I}_{n-i} & \mathbf{0}_{(n-i)\times i}\\ \mathbf{0}_{i\times(n-i)} & \gamma\mathbf{I}_{i} \end{array}\right)\Xi\right\|^2_F\right)^{-n},$$ where \mbox{$\mathbf{I}_{k}$} is the identity matrix of size \mbox{$k\times k$}, $$\Xi=\left(\begin{array}{ccccc}1 & \eta & \eta^2 & \cdots & \eta^{n-1}\\ 1 & \sigma(\eta) & (\sigma(\eta))^2 & \cdots & (\sigma(\eta))^{n-1}\\ 1 & \sigma^2(\eta) & (\sigma^2(\eta))^2 & \cdots & (\sigma^2(\eta))^{n-1}\\ \vdots & \vdots & \vdots & \ddots & \vdots\\ 1 & \sigma^{n-1}(\eta) & (\sigma^{n-1}(\eta))^2 & \cdots & (\sigma^{n-1}(\eta))^{n-1}\end{array}\right),$$ and \mbox{$\|\cdot\|_F$} stands for the Frobenius norm of input matrix. In other words, \mbox{$E$} represents the total energy needed for encoding \mbox{$\mathbf{S}$}. Table~\ref{table3} lists the normalized diversity products and corresponding coding structures for the new codes constructed in Example~\ref{real} and those in~\cite{GX09} and~\cite{LC09} with \mbox{$n=8$} and \mbox{$16$}; the smaller the absolute value of an integer non-norm element \mbox{$\gamma$}, the larger the normalized diversity product \mbox{$\xi(\mathbf{S})$}.
\section{Conclusions}
\label{conc}
  This paper presents a newly constructive method that proves \mbox{$1+\bi$} is an integer non-norm element with the smallest absolute value over QAM for every \mbox{$n\geq 5$}. The proposed designs achieve better normalized diversity products and the optimal diversity-multiplexing gains tradeoff.

\end{document}